\documentclass[11pt, a4paper]{amsart}

\usepackage{a4wide}

\usepackage[utf8]{inputenc}
\usepackage[T1]{fontenc}
\usepackage{subfigure, url}

\usepackage[round]{natbib}

%%%%ADDED SHORTCUTS AND PACKAGES
\usepackage{amssymb,amscd,amsfonts,mathrsfs,amsmath}
\usepackage[all]{xy}
\usepackage{color}

\usepackage{tikz}
\usetikzlibrary{arrows,shapes,positioning}
\usetikzlibrary{decorations.markings}
\tikzstyle arrowstyle=[scale=1]
\tikzstyle directed=[postaction={decorate,decoration={markings,
    mark=at position .5 with {\arrow[arrowstyle]{stealth}}}}]
\newcommand{\ra}{\rightarrow}
\newcommand{\e}{\varepsilon}

\newcommand{\Ra}{\Rightarrow}
\newcommand{\cyc}{{cyc}}

\newcommand\A{A}
\newcommand\B{B}
\newcommand\C{C}

\newcommand{\X}{X}
\newcommand{\Y}{Y}
\newcommand{\Z}{Z}

\newcommand{\s}{S}
\newcommand{\cA}{\mathcal{A}}
\newcommand{\cT}{\mathcal{T}}
\newcommand{\cF}{\mathcal{F}}
% references

\newcommand{\Aho}{MR0258547}

\newcommand{\HU}{MR645539}
\newcommand{\Brandstadt}{MR630064}
\newcommand{\Oshiba}{MR0478788}
\newcommand{\Maslov}{MR0334597}

\newcommand{\ERnumber}{MR0363010}

\newtheorem{thm}{Theorem}[section]
\newtheorem{prop}[thm]{Proposition}
\newtheorem{lem}[thm]{Lemma}

\newtheorem{defn}[thm]{Definition}

%%%%%%%%%% prettyref
\usepackage{prettyref}
\newcommand{\prref}[1]{\prettyref{#1}}
%Referenzen 
\newif{\ifshort}\shorttrue
\shortfalse
\ifshort
%%%%%%%%%% SHORT REFS
\newrefformat{alg}{Alg.~\ref{#1}}
 \newrefformat{thm}{Thm.~\ref{#1}}
 \newrefformat{lem}{Lem.~\ref{#1}}
 \newrefformat{def}{Def.~\ref{#1}}
 \newrefformat{cor}{Cor.~\ref{#1}} 
 \newrefformat{prop}{Prop.~\ref{#1}}
 \newrefformat{sec}{Sect.~\ref{#1}}
 \newrefformat{kap}{Chap.~\ref{#1}}
 \newrefformat{ex}{Ex.~\ref{#1}}
 \newrefformat{app}{App.~\ref{#1}}
 \newrefformat{alg}{Alg.~\ref{#1}}
 \newrefformat{eq}{(\ref{#1})}%
 \newrefformat{tab}{Tab.~\ref{#1}}
 \newrefformat{fig}{Fig.~\ref{#1}}
 \newrefformat{rem}{Rem.~\ref{#1}}
 \newrefformat{ap}{{\sc Appendix}}

\else
\newrefformat{alg}{Algorithm~\ref{#1}}
\newrefformat{thm}{Theorem~\ref{#1}}
\newrefformat{lem}{Lemma~\ref{#1}}
\newrefformat{def}{Definition~\ref{#1}}
\newrefformat{cor}{Corollary~\ref{#1}}
\newrefformat{prop}{Proposition~\ref{#1}}
\newrefformat{sec}{Section~\ref{#1}}
\newrefformat{subsec}{Subsection~\ref{#1}}
\newrefformat{kap}{Chapter~\ref{#1}}
\newrefformat{ex}{Example~\ref{#1}}
\newrefformat{rem}{Remark~\ref{#1}}
\newrefformat{fig}{Figure~\ref{#1}}
\newrefformat{app}{Appendix~\ref{#1}}
%\newrefformat{alg}{Algorithm~\ref{#1}}
\newrefformat{eq}{Equation~(\ref{#1})}
\newrefformat{tab}{Table~\ref{#1}}
\newrefformat{ap}{{\sc Appendix}}
%%%%%%

\begin{document}
\title[Permutations of  context-free,  ET0L and indexed languages]{Permutations of  context-free,  ET0L \\and indexed languages}

\author[T. Brough]{Tara Brough}
\address{Universidade de Lisboa, Portugal}
\email{tarabrough@gmail.com}

\author[L. Ciobanu]{Laura Ciobanu}
\address{University of Neuch\^atel, Switzerland}
\email{laura.ciobanu@unine.ch}

\author[M. Elder]{Murray Elder}
\address{
The University of Newcastle, Australia}
\email{murray.elder@newcastle.edu.au}

\author[G. Zetzsche]{Georg Zetzsche}
\address{  LSV, CNRS \& ENS Cachan, Universit\'{e} Paris-Saclay, France}
\email{zetzsche@cs.uni-kl.de}

\keywords{ET0L, EDT0L, indexed, context-free,    cyclic closure}
\subjclass[2010]{20F65; 68Q45}
\date{May 2016}
\thanks{Research supported by  London Mathematical Society Scheme 4 grant 41348, Swiss National Science 
Foundation Professorship FN PP00P2-144681/1,  Australian Research Council grant  FT110100178, and  the Postdoc-Program of the German Academic Exchange Service (DAAD)}

\begin{abstract}
For a language $L$, we consider its cyclic closure, and more generally the language $C^k(L)$, which consists of all words obtained by 
partitioning words from $L$ into $k$ factors and permuting them.
We prove that the classes of ET0L and EDT0L languages are closed under the 
operators $C^k$.  This both sharpens and generalises Brandst\"adt's result that if $L$ is context-free 
then $C^k(L)$ is context-sensitive 
 and not context-free  in general for $k\geq 3$.
We also show  that the cyclic closure 
of an indexed language is indexed. 
\end{abstract}

\maketitle

\section{Introduction}
\label{sec:in}
In this note we investigate closure properties of context-free, ET0L, EDT0L and indexed languages under the operation of permuting a finite number of factors.
Let $S_k$ denote the set of permutations on $k$ letters.
We sharpen a result of \cite{\Brandstadt}  who proved that if $L$ is context-free (respectively one-counter, linear) then the language \[C^k(L)=\{w_{\sigma(1)}\ldots w_{\sigma(k)}\mid w_1\ldots w_k\in L,\sigma\in S_k\}\]
 is not context-free (respectively one-counter, linear) in general for $k\geq 3$.
In our main result, Theorem \ref{thm:etolmain}, we prove that if $L$ is ET0L (respectively EDT0L), then $C^k(L)$ is also ET0L (respectively EDT0L). Since context-free languages are ET0L, it follows that if $L$ is context-free,  then  $C^k(L)$ is ET0L. 
 \cite{\Brandstadt}  proved that regular, context-sensitive and recursively enumerable languages are closed under $C^k$, so our results extend this list to include ET0L and EDT0L.

The language $C^2(L)$ is simply the 
 {\em cyclic closure} of $L$, given by
\[\cyc(L)=\{w_2w_1\mid w_1w_2\in L\}.\] 
\cite{\Maslov, \Oshiba}
proved that the cyclic closure of a context-free language is context-free. In Theorem \ref{thm:firstmain} we show that the same is true for indexed languages.

The cyclic closure of a language, as well as the generalization $C^k$, are natural operations on languages, which can prove useful in determining whether a language belongs to a certain class. These operations are particularly relevant when studying languages attached to conjugacy in groups and semigroups (see \cite{CHHR}).

\section{Permutations of ET0L and EDT0L languages}

The acronym {ET0L} (respectively {EDT0L}) refers to {\em {\rm E}xtended, {\rm T}able, 
{\rm 0} interaction, and {\rm L}indenmayer} (respectively D\emph{eterministic}). 
There is a vast literature on Lindenmayer systems, see \cite{RozS86}, with various  acronyms such as D0L, DT0L, ET0L, HDT0L and so forth. 
The following inclusions hold: EDT0L $\subset$ ET0L $\subset$ indexed, and context-free $\subset$ ET0L.
Furthermore, the classes of EDT0L and 
context-free languages are incomparable. 

\begin{defn}[ET0L]
An \emph{ET0L-system} is a tuple $H=(V,\cA,\Delta,I)$, where 
\begin{enumerate}
\item $V$ is a finite alphabet,
\item $\cA\subseteq V$ is the subset of \emph{terminal symbols},
\item $\Delta=\{P_1,\ldots,P_n\}$ is a finite set of \emph{tables}, meaning
each $P_i$ is a finite subset 
of $V\times V^*$, and
\item $I\subseteq V^*$ is a finite set of \emph{axioms}.
\end{enumerate}

A word over $V$ is called a \emph{sentential form (of $H$)}.
For $u,v\in V^*$, we write $u\Rightarrow_{H,i} v$ if $u=c_1\cdots c_m$
for some $c_1,\ldots, c_m\in V$ and $v=v_1\cdots v_m$ for some 
$v_1,\ldots,v_m\in V^*$ with $(c_j,v_j) \in P_i$ for every $j\in \{1,\ldots, m\}$.  
We write $u\Rightarrow_H v$
if $u\Rightarrow_{H,i} v$ for some $i\in\{1,\ldots,n\}$.
If there exist sentential forms $u_0,\ldots,u_k$ with $u_i\Rightarrow_{H} u_{i+1}$ for $0\leq i\leq n-1$,
then we write $u_0 \Rightarrow_H^* u_k$.
The language \emph{generated by $H$} is defined as
\begin{align*} L(H) & =\{ v\in \cA^* \mid w\Rightarrow_H^* v~\text{for some $w\in I$} \}. \end{align*}
A language is ET0L if it is equal to $L(H)$ for some ET0L system $H$.
 \end{defn}
 
We may write $c\to v\in P$ to mean $(c,v)\in P$.
We call $(c,v)$ a {\em rule for $c$},
and use the convention that if for some $c\in V$
no rule for $c$ is specified in $P$, then $P$ contains the 
rule $(c,c)$.
 
 \begin{defn}[EDT0L]
An \emph{EDT0L-system}
 is an ET0L system where in each table there is
 exactly one rule for each letter in $V$. 
 A language is EDT0L if it is equal to $L(H)$ for some EDT0L system $H$.
 \end{defn}

In this section we prove the following:
\begin{thm}\label{thm:etolmain}
Let $\cA$ be a finite alphabet.
If $L\subseteq \cA^*$ is ET0L (respectively EDT0L)  then  $C^k(L)$ is ET0L (respectively EDT0L).
\end{thm}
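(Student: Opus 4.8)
The plan is to fix a permutation $\sigma\in S_k$ and construct, from a given ET0L (respectively EDT0L) system $H=(V,\cA,\Delta,I)$ for $L$, a system $H_\sigma$ generating $\{w_{\sigma(1)}\cdots w_{\sigma(k)}\mid w_1\cdots w_k\in L\}$; the full $C^k(L)$ is then obtained by also guessing $\sigma$ (together with the initial data described below) at the first step, so that a single system suffices and no appeal to closure under union is needed. The guiding observation is that table application is \emph{compositional over concatenation}: if $u=u'u''$ and $u\Ra_{H,i}v$, then $v=v'v''$ with $u'\Ra_{H,i}v'$ and $u''\Ra_{H,i}v''$. Hence if I maintain $k$ ``factor tracks'', each a sentential form of $H$, and at every step apply to all of them the $H_\sigma$-image of a single table of $H$, then the concatenation of the tracks in their \emph{natural} order $1,2,\dots,k$ is always a genuine sentential form of $H$, while I am free to place the tracks physically in the order $\sigma(1),\dots,\sigma(k)$. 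Reading off the terminal result then yields $w_{\sigma(1)}\cdots w_{\sigma(k)}$ for some factorisation $w_1\cdots w_k\in L$.

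The difficulty is that merely distributing an axiom among the $k$ tracks produces cut points only at boundaries between axiom letters, whereas $C^k$ allows a cut anywhere in the final word --- possibly strictly inside the subtree generated by a single axiom letter. To reach arbitrary cuts I would introduce \emph{split symbols}: a letter $c$ of $H$ whose generated factor is to be divided between consecutive factors $i,i+1,\dots,i+t$ is represented by several \emph{fragments}, one placed in each of the corresponding tracks. When such a $c$ is rewritten by a rule $c\to b_1\cdots b_r$, I choose how to distribute $b_1,\dots,b_r$ into consecutive groups for the tracks $i,\dots,i+t$; a group boundary falling between two $b_j$ becomes a clean cut, while a boundary falling inside some $b_j$ simply turns $b_j$ into a new split symbol. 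Pushing cuts down the derivation tree in this way lets each cut descend to an arbitrary leaf boundary, so every factorisation of every $w\in L$ is realised. Since there are only $k-1$ cuts, at most $k-1$ split symbols are ever active, so this bookkeeping is bounded.

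The main obstacle --- and the crux of the proof --- is \emph{synchronisation}. The fragments of one split symbol must all commit to the \emph{same} rule of $H$ and the \emph{same} split point, yet because $\sigma$ may separate factors that are adjacent in the natural order, these fragments sit in physically distant, non-adjacent blocks of the sentential form, so local rewriting cannot coordinate them directly. I would resolve this by pushing the choice into the \emph{table}: assign each of the boundedly many active split symbols a \emph{colour} from a fixed finite set, and let a table of $H_\sigma$ record, for each colour, the chosen rule of $H$ and the chosen split point. A single table then rewrites all fragments of a given colour consistently, no matter where they lie, while clean (non-split) letters retain the ordinary per-occurrence rules of $H$. This is faithful because the fragments of one split symbol represent a single occurrence of $c$ in $H$, so a shared choice is exactly what correctness demands, whereas distinct occurrences --- carrying distinct colours, or being clean --- remain free to differ.

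Finally I would add a cleanup table that maps each clean terminal-carrying letter to the corresponding symbol of $\cA$ and sends every fragment or non-terminal to a dead symbol, so that only fully resolved, all-terminal configurations contribute words over $\cA$; the physical $\sigma$-layout then guarantees these are exactly the words $w_{\sigma(1)}\cdots w_{\sigma(k)}$. The construction preserves determinism: when $H$ is EDT0L each table of $H$ is a function, so the per-colour rule choice is forced and the only remaining nondeterminism --- which split point to use --- is recorded as a \emph{choice of table}, which EDT0L permits. Thus $H_\sigma$, and hence the combined system, is EDT0L whenever $H$ is and ET0L in general, giving \prref{thm:etolmain}.
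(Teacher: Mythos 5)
Your proposal is correct, and it takes a genuinely different route from the paper's. The paper is modular: it first inserts markers $\#_0,\dots,\#_k$ at the factor boundaries (Lemma~\ref{lem:inserthash}), then applies $k$ times a one-factor rotation $\pi$ on $(a,b)$-languages (Proposition~\ref{prop:moveright}), which moves the factor between two designated markers to the end of the word, and finally erases the markers and takes a finite union over $\sigma\in S_k$; this relies on closure under homomorphism, finite union and (for the EDT0L marker-insertion step) intersection with regular languages. You instead build one system outright: $k$ tracks laid out physically in the order $\sigma(1),\dots,\sigma(k)$ and grown in parallel by one table of $H$ per step, with cuts pushed down the derivation by boundedly many split symbols whose fragments are kept consistent by recording a per-colour rule and split point in the table itself. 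Crucially, both constructions exploit the same feature of ET0L systems: a table is applied globally, so nondeterministic guesses that must agree at physically distant positions can be synchronised through the choice of table --- in the paper via the tables $P'_{i,w}$ and $P'_{i,u,v}$, which guess bounded-length strings and regenerate them far away ($p\to pu$, $q\to vq$), and in your construction via the per-colour choices. Your annotation of a letter by the cuts lying inside its yield is the analogue of the paper's $(a,b)$-morphism (Lemma~\ref{lem:abmorphism}), which annotates a letter by the markers occurring in its yield. The paper's route buys modularity and short verifications, plus a rotation operator of independent interest; your route buys self-containedness (no markers, no erasing homomorphism, no union, no intersection with regular sets --- the last being exactly the extra ingredient the paper needs for EDT0L) and a single system covering all $\sigma$, at the price of heavier bookkeeping and a correctness invariant (merging the fragments in the natural-order concatenation of the tracks always yields a reachable sentential form of $H$) that your sketch asserts but does not carry out by the required induction. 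Your EDT0L argument is also sound: the per-colour choices live in the choice of which table to apply, which is precisely where EDT0L permits nondeterminism.
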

\begin{proof} We start by showing that if $\#_0,\dots, \#_k$  are distinct symbols not in $\cA$ and $L$ is ET0L (respectively EDT0L) then so is  $$L'=\{\#_0w_1\#_1\dots \#_{k-1}w_k\#_k\mid w_1\dots w_k\in L\}.$$ This will be done in Lemma~\ref{lem:inserthash} below.
We then prove in Proposition~\ref{prop:moveright} 
that if $L_1$ is an ET0L (respectively EDT0L) language where each word in $L_1$ has two symbols 
$a,b$ appearing exactly once, 
then $L_2=\{uabwv\mid uavbw \in L_1\}$ is ET0L (respectively EDT0L).
For each permutation $\sigma\in S_k$ we apply this result to $L'$ for $$(a,b)=\left(\#_{\sigma(1)-1},\#_{\sigma(1)}\right), \dots, \left(\#_{\sigma(k)-1},\#_{\sigma(k)}\right)$$  to obtain the ET0L (respectively EDT0L) language 
\begin{align*}L_\sigma&=\{\#_0\#_1\dots \#_kw_{\sigma(1)}\dots w_{\sigma(k)}\mid \#_0w_1\#_1\dots \#_{k-1}w_k\#_k\in L'\}.\end{align*}
We obtain $C^k(L)$ by applying erasing homomorphisms to remove the $\#_i$,  and taking the union over all $\sigma\in S^k$.
Since ET0L (respectively EDT0L) languages are closed under homomorphism and finite union, this shows that $C^k(L)$ is ET0L (respectively EDT0L).

Thus the proof will be complete once we established the above facts.
\end{proof}

\begin{lem}\label{lem:insertonehash}
If $L\subseteq \mathcal A^*$ is EDT0L and $\#$ is a symbol not in $\mathcal A$ then the language
$$L_\#=\{u\#v\mid uv\in L\}$$ is EDT0L.
\end{lem}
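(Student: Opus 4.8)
The plan is to build an EDT0L system $H'$ for $L_\#$ directly from a system $H=(V,\mathcal A,\Delta,I)$ with $L(H)=L$, by running the derivation of $H$ essentially unchanged while carrying along a single extra symbol that records where the inserted $\#$ will eventually sit. The key observation is that no left/right bookkeeping on the remaining letters is needed: once $\#$ has been placed in a sentential form it is never rewritten (it maps to itself), so as every other symbol is rewritten in place the $\#$ automatically stays between the descendants of the letters to its left and those to its right. The only genuine difficulty is locating the insertion point when it falls strictly inside the eventual expansion of a single letter; for this I introduce, for each $a\in V$, a marked copy $a^M$ meaning ``the insertion point lies somewhere inside the subtree generated by this occurrence of $a$''.

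Concretely I would take $V'=V\cup\{a^M:a\in V\}\cup\{\#,\dagger\}$ with terminal set $\mathcal A\cup\{\#\}$, where $\dagger$ is a dead non-terminal carrying the rule $\dagger\to\dagger$ in every table. A sentential form of $H'$ will always be in one of two shapes: mode (i), $\gamma_1\,a^M\,\gamma_2$ with $\gamma_1,\gamma_2\in V^*$, or mode (ii), $\gamma_1\,\#\,\gamma_2$ with $\gamma_1,\gamma_2\in V^*$; in both cases $\gamma_1 a\gamma_2$ (resp.\ $\gamma_1\gamma_2$) is an $H$-sentential form reachable from an axiom of $H$. For each table $P_t\in\Delta$, writing $\rho_t(c)$ for the image of $c$, I create tables of $H'$ that rewrite every background letter $c\in V$ by $c\to\rho_t(c)$, so the underlying $H$-derivation is mirrored faithfully, and that treat the distinguished symbol as follows. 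Where $\rho_t(a)=x_1\cdots x_s$, a table $S_{t,a,i}$ sends $a^M\to x_1\cdots x_{i-1}\,x_i^M\,x_{i+1}\cdots x_s$ for $1\le i\le s$ (the insertion point stays strictly inside the $i$-th child), while a table $T_{t,a,i}$ sends $a^M\to x_1\cdots x_i\,\#\,x_{i+1}\cdots x_s$ for $0\le i\le s$ (it descends to a gap and commits the $\#$). Pure background advance in mode (ii) is handled by tables $R_t$ sending $c\to\rho_t(c)$ and fixing $\#$. To keep the system deterministic and suppress spurious derivations, each of these tables sends $b^M\to\dagger$ for every $b$ not equal to the letter it acts on, so it is productive only when the marked letter actually matches; every table thus assigns exactly one rule to each letter of $V'$, and EDT0L-ness is preserved. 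As axioms I take, for each $w=c_1\cdots c_r\in I$, all strings $c_1\cdots c_{j-1}\,c_j^M\,c_{j+1}\cdots c_r$ and all strings $c_1\cdots c_j\,\#\,c_{j+1}\cdots c_r$, covering every initial position of the insertion point relative to $w$.

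To verify $L(H')=L_\#$ I would prove the invariant above by induction on the length of a derivation that never creates $\dagger$: the axioms satisfy it, and each table type preserves it precisely because the background is rewritten by a single $H$-table while the marked symbol is refined consistently with how a single position is distributed among the children of $a$. Since $a^M,\dagger\notin\mathcal A\cup\{\#\}$, any terminal word of $H'$ must be in mode (ii) with $\gamma_1\gamma_2$ terminal, whence it equals $\gamma_1\#\gamma_2$ with $\gamma_1\gamma_2\in L$, giving $L(H')\subseteq L_\#$. For the reverse inclusion I would start from any $H$-derivation of a word $uv\in L$ and, in each of its sentential forms, locate the position $|u|$ of the final word: it lies either strictly inside the expansion of a unique non-terminal letter (mark it, mode (i)) or exactly at a boundary between expansions (mode (ii)); one checks that the marked letter is always non-terminal (a length-one factor has no strict interior), so it is genuinely expanded at each step, and that tracking how this position refines from one step to the next selects one of the axioms and one of the tables above, yielding an $H'$-derivation of $u\#v$.

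The main obstacle to keep in view is the interaction between EDT0L determinism — each table specifies exactly one rule per letter and rewrites the whole form at once — and the requirement to realise every insertion position. This is exactly why the construction threads the positional information through a single marked symbol rather than through labels on all letters, and why the $\dagger$ sink is needed: it ensures that on any path surviving to a terminal word each table was applied only when its distinguished letter was present, so that no insertion position other than the intended one can leak into $L(H')$. The one point demanding care in the routine verification is the treatment of erasing rules $\rho_t(a)=\varepsilon$, where only $T_{t,a,0}\colon a^M\to\#$ applies, forcing the insertion point to commit precisely where the erased letter stood.
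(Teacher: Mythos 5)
Your construction is correct and essentially the same as the paper's: your marked letters $a^M$ play exactly the role of the paper's subscripted copies $c_\#$, your tables $S_{t,a,i}$ and $T_{t,a,i}$ correspond to the paper's tables $P_{i,\#}$ and $P_{\#,i}$ (the paper indexes new tables only by the position $i$, letting one table serve whichever letter happens to be marked, so it needs neither the per-letter indexing nor the $\dagger$ sink), and both arguments commit the $\#$ once and then mirror the original derivation. The remaining differences are cosmetic --- the paper assumes WLOG $I\subseteq V$ and marks the axiom symbol, whereas you enumerate marked and $\#$-inserted versions of each axiom word.
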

\begin{proof}
Let $H=(V,\cA,\Delta,I)$ be an EDT0L system with $L=L(H)$. Without loss of generality we can assume $I \subseteq V$.
Define an EDT0L system $H_\#=(V_\#,\cA\cup\{\#\},\Delta_\#,I_\#)$ as follows: $V_\#$ is the disjoint union $V\cup\{c_\#\mid c\in V\}$, $I_\#=\{s_\#\mid s\in I\}$, and
 $m=\max_{P\in \Delta}\{|w|\mid (c,w)\in P\}$, the length of the longest right-hand side of any table.
Furthermore, we define $\Delta_\#$ to be the disjoint union $\Delta\cup \{P_{i,\#}, P_{\#,i}\mid P\in \Delta, i\in [0,m] \}$, where 
\begin{equation} 
\begin{split}
P_{i,\#}&:=\{c_\# \to ud_\#v \mid c \to udv\in P, |u|=i, d\in V\} \cup P, \\
P_{\#,i}&:=\{c_\# \to u\#v \mid c \to uv\in P, |u|=i\} \cup P.
\end{split}
\end{equation}

We point out that if $c\ra\e\in P$, where $\varepsilon$ denotes the empty word, then $P_{\#,0}=\{c_\#\ra \#\}$, so $\{c_\# \to \# \mid c \to \varepsilon \in P\}$ 
will be included in $\Delta_\#$.

The new system remains finite  since we have added a finite number of new letters and tables, and deterministic since letters $v_\#$ appear exactly once on the left side of each rule in the new tables.

Each word in $L(H_\#)$ is obtained starting with $s_\#\in I_\#$ and applying tables of the form $P_{i,\#}$
some number of times, until at some point, since $\cA\cup\{\#\}$ does not contain any letter with subscript $\#$, 
a table of the form $P_{\#,i}$
must be applied. Before this point there is precisely one letter in the sentential form with subscript $\#$, and
after  there are no letters with subscript $\#$.
Also, if $uv\in L(H)$, then there is some $a\in I$ with $a\Rightarrow_H^* uv$, and by construction $a_\#\Rightarrow_{H_\#}^* u\#v$.
\end{proof}

\begin{lem}\label{lem:inserthash}
If $L\in\cA$ is ET0L (respectively EDT0L)  and $\#_0,\dots, \#_n$ are distinct symbols not in $\mathcal A$, then
$$L'=\{\#_0u_1\#_1\dots u_n\#_n\mid u_1\dots u_n\in L\}$$ is ET0L (respectively EDT0L).
\end{lem}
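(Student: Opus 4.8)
The plan is to generalise the single-mark construction of \prref{lem:insertonehash}. The two extreme separators $\#_0,\#_n$ are harmless, since they sit at the ends of every word: I incorporate them directly into the axioms and let $\#_0,\#_n$ map to themselves in every table. The real task is to insert the interior separators $\#_1,\dots,\#_{n-1}$, in this order, at arbitrary positions of a word of $L$. I would do this by enriching the alphabet with \emph{interval-marked} letters $c_{[i,j]}$ for $1\le i\le j\le n-1$, where $c_{[i,j]}$ records that $\#_i,\dots,\#_j$ are still to be inserted, in order, somewhere inside the subword eventually derived from this occurrence of $c$ (an unmarked letter carries no pending separators). Beginning from an axiom $\#_0\, s_{[1,n-1]}\, \#_n$, whenever an interval-marked letter is rewritten by a rule $c\to d_1\cdots d_\ell$ of the original system I split $[i,j]$ into consecutive blocks and either \emph{place} a block immediately in one of the gaps of $d_1\cdots d_\ell$ (the two boundary gaps included), emitting the corresponding $\#_t$ as genuine terminals, or \emph{defer} a contiguous sub-block to a child $d_t$ by marking it $(d_t)_{[i',j']}$. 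Since a subtree occupies a contiguous span of the final word, the pending separators of any occurrence always form a contiguous index range, so intervals are exactly the right bookkeeping; and since L-system rewriting preserves the left-to-right order of letters and each $\#_t$ is emitted exactly once, the separators emerge automatically in the order $\#_1,\dots,\#_{n-1}$. A terminal word appears precisely when all marks are discharged, and a routine invariant identifies the resulting language as $L'$.

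The ET0L case is then immediate: for each original table $P$ I form one table consisting of $P$ together with, for every interval-marked letter $c_{[i,j]}$, all the rules coming from admissible place/defer splits of $[i,j]$; nondeterminism within a table is permitted, so this is a valid ET0L system. The main obstacle is the EDT0L case, where each table must contain exactly one rule per letter, yet a sentential form may carry several marked letters that ought to make independent placement decisions. This is the same tension resolved in \prref{lem:insertonehash}, where the choice of \emph{position} of the single hash is pushed into the choice of \emph{table} (the family $P_{i,\#}$) rather than into a choice among several rules for one letter. I would do the same here: for each original table $P$ and each ``policy'' $\pi$ — a function assigning to every possible interval-marked symbol one admissible split of its interval — I include a single deterministic table $(P,\pi)$. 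There are only finitely many marked symbols and finitely many splits, hence finitely many policies, and each $(P,\pi)$ has a unique rule per letter.

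The point that makes this legitimate is that in every reachable sentential form the marked letters carry pairwise \emph{disjoint} intervals (each separator is pending in exactly one subtree), so the marks present are pairwise distinct symbols; a single policy $\pi$ therefore assigns each of them its own rule with no conflict, and since $\pi$ may be chosen afresh at each step one can still realise any combination of per-mark decisions. Establishing this disjointness invariant, alongside soundness and completeness of the simulation, is the technical heart of the argument, but it runs exactly parallel to the verification already carried out for \prref{lem:insertonehash}. (One could instead insert the $\#_i$ one at a time via \prref{lem:insertonehash} and then intersect with the regular language $\cA^*\#_1\cA^*\cdots\#_{n-1}\cA^*$ to force the correct order; I avoid this, since it would rely on closure of EDT0L under intersection with regular languages, which is not among the properties used elsewhere in the paper.)
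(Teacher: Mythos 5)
Your proof is correct, but it follows a genuinely different route from the paper's. The paper settles the ET0L case in one line, from closure of ET0L under rational transductions, and for EDT0L it uses precisely the reduction you set aside in your final parenthesis: apply \prref{lem:insertonehash} $n+1$ times to insert $\#_0,\dots,\#_n$ one at a time at arbitrary positions, then intersect with the regular language $\{\#_0u_1\#_1\dots u_n\#_n\mid u_i\in\cA^*\}$ to force the correct order. Closure of EDT0L under intersection with regular languages is a known fact (see \cite{RozS86}), so your objection concerns self-containedness rather than correctness; the paper's argument is sound and considerably shorter. What your construction buys is a uniform treatment of ET0L and EDT0L that avoids regular intersection altogether, and its key points are right: the separators pending at one occurrence form a contiguous index range because each subtree yields a contiguous factor of the final word; the intervals at distinct occurrences are pairwise disjoint since each $\#_t$ is pending at exactly one occurrence; hence no marked symbol ever repeats in a reachable sentential form, so a single rule per marked symbol per table suffices, with the per-occurrence nondeterminism pushed into the choice of table via your policies. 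That determinism-restoring device is exactly the one the paper itself uses elsewhere (the tables $P_{i,\#}$ of \prref{lem:insertonehash}, and the factorisation-indexed tables in the EDT0L part of \prref{prop:moveright}), so your proof is very much in the spirit of the paper, only heavier: you trade its short reduction, which leans on a citable closure property, for a direct construction whose correctness rests on the disjointness invariant you identified.
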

\begin{proof}
Since ET0L languages are  closed under rational transduction (\cite{RozS86}), the
result is immediate for ET0L. In contrast, the EDT0L languages are not closed under inverse
homomorphism (for example, the language $\{a^{2^n}\mid n\in\mathbb N\}$ is
EDT0L and its inverse homomorphic image $\{w\in\{a,b\}^*\mid \exists n\in\mathbb
N(|w|_a=2^n)\}$ is not (\cite{\ERnumber}, Example 3). Instead, we apply 
Lemma~\ref{lem:insertonehash} $n+1$ times to insert single copies of the $\#_i$,  then intersect
with the regular language $\{\#_0u_1\#_1\dots u_n\#_n\mid  u_i\in \mathcal
A^*\}$ to ensure that the $\#_i$ appear in the correct order.
\end{proof}

 \begin{defn}[$(a,b)$-language]
Let $\cT$ be a finite alphabet 
and $a,b\in \cT$  distinct symbols. We say that $w\in \cT^*$ is an
\emph{$(a,b)$-word} if $w\in X^* a X^* b X^*$, where $X=\cT\setminus \{a,b\}$. A
language $L\subseteq \cT^*$ of $(a,b)$-words is called an \emph{$(a,b)$-language}.

We define a function $\pi$ on $(a,b)$-words as follows. If $w=xaybz\in\cT^*$, then
$\pi(w)=xabzy$. For an $(a,b)$-language $L$, we set $\pi(L)=\{\pi(w) \mid w\in
L\}$. 
\end{defn}

Suppose $L$ is an $(a,b)$-language and $H=(V,\cT, \Delta, I)$ is an ET0L or EDT0L system with $L=L(H)$.
 \begin{defn}[$(a,b)$-morphism]
A morphism $\varphi:V^*\ra \{a,b\}^*$ is called an {\em $(a,b)$-morphism} (for $H$)
if \begin{enumerate}\item[(1)] 
$\varphi(a)=a$, $\varphi(b)=b$, and $\varphi(c)=\varepsilon$ for  $c\in \cT\setminus \{a,b\}$, and
\item[(2)] if $u,v\in V^*$ with $u\Rightarrow_H v$ then $\varphi(u)=\varphi(v)$.
\end{enumerate}
\end{defn}

\begin{lem}\label{lem:abmorphism}
Let $L$ be an ET0L (respectively EDT0L) language that is an $(a,b)$-language.  Then $L$ can be generated by
some ET0L-system (respectively EDT0L-system) that admits an $(a,b)$-morphism.
\end{lem}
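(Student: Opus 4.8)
The plan is to generate $L$ by a product construction that refines the alphabet of a given system so that every symbol carries, as an annotation, the $ab$-content it is responsible for producing. Concretely, starting from a system $H=(V,\cT,\Delta,I)$ with $L(H)=L$, I would take the refined alphabet $V'=V\times\{\varepsilon,a,b,ab\}$ and declare $\varphi$ to read off the second coordinate, setting $\varphi((t,\ell))=\ell$; the annotations of the terminals are forced to be $(a,a)$, $(b,b)$ and $(t,\varepsilon)$ for $t\in\cT\setminus\{a,b\}$, so that condition (1) holds automatically. An original rule $c\to c_1\cdots c_m$ is refined to $(c,\ell)\to (c_1,\ell_1)\cdots(c_m,\ell_m)$ for every factorisation $\ell=\ell_1\cdots\ell_m$ of $\ell$ as an ordered word over $\{a,b\}$ lying in $\{\varepsilon,a,b,ab\}$; by design each such rule preserves $ab$-content, which is exactly condition (2). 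Taking as axioms all annotations of the words of $I$ whose total content is $ab$ guarantees, via the invariance, that every terminal word produced has exactly one $a$ before exactly one $b$, and projecting away the annotations recovers a genuine $H$-derivation, so that soundness and completeness of $L(H')=L$ follow by labelling each node of a derivation with the $ab$-content of its subtree. For the ET0L case this is immediate: I bundle, for each table $P\in\Delta$, all of its content-preserving refined rules into a single refined table, and the nondeterminism between different factorisations of a given $\ell$ is harmless.

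The main obstacle is preserving determinism in the EDT0L case, since a single rule $(c,ab)\to\cdots$ admits several content-preserving refinements (the $a$ and the $b$ may be sent to different children), which would place several rules for $(c,ab)$ into one table. The key observation that resolves this is that in any derivation witnessing a word of $L$ the total $ab$-content of every sentential form equals $ab$, so at most two symbols carry a non-empty annotation: either a single symbol labelled $ab$, or one labelled $a$ strictly to the left of one labelled $b$. Hence the only genuine choice at each step is where, inside the bounded-length right-hand side of the at most two active symbols, to place the single $a$ and the single $b$.

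I would therefore index the refined tables by the original table together with positions $(i,j,i',j')$ bounded by $m=\max\{|w|\mid (c,w)\in P\in\Delta\}$, recording that an $a$-carrier sends its $a$ to child $i$, a $b$-carrier sends its $b$ to child $j$, and an $ab$-carrier splits as $(i',j')$ with $i'\le j'$ (to respect the order $a$ before $b$ demanded by condition (2)). Each such table then contains exactly one rule per letter: the unique all-$\varepsilon$ refinement for $\varepsilon$-labelled symbols, and the refinement dictated by the prescribed positions for the active labels, routed to dedicated dead symbols $\bot_a,\bot_b,\bot_{ab}$ (with $\varphi(\bot_\ell)=\ell$ and looping rules) whenever a chosen position exceeds the length of the right-hand side. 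This restores determinism while keeping every rule content-preserving; completeness holds because the sparse, uniquely located annotations let one realise the content-labelled derivation by choosing at each step the table matching the positions actually used, and soundness holds because the dead symbols can never reach a terminal.
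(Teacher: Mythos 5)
Your proposal is correct and takes essentially the same approach as the paper: annotate each symbol with a factor of $ab$, refine every rule to its content-preserving distributions of that factor (which gives condition (2) by construction), and in the EDT0L case restore determinism by splitting each table into one table per choice of how the factors are distributed along right-hand sides. Your position-indexing $(i,j,i',j')$ with dead symbols $\bot_a,\bot_b,\bot_{ab}$ is merely a concrete bookkeeping of the paper's ``separate table for each choice of factorisation'', which instead handles incompatible choices via the identity-rule convention.
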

\begin{proof}
Suppose $L$ is generated by $H=(V,\cT,\Delta,I)$, where $a,b\in \cT$ and $\Delta=\{P_1,\ldots,P_n\}$. 
Without loss of generality, we may assume that $I\subseteq V$.
We define a new ET0L (respectively EDT0L) system $H'=(V',\cT,\Delta',I')$ 
as follows.
Let $\cF= \{\varepsilon, a, b, ab\}$ be the set of factors of $ab$. 
Let $V'=(V\times\cF)\cup \cT$ be the new alphabet and define the morphism $\varphi\colon V'^*\to \{a,b\}^*$ by
$\varphi((c,f))=f$ for $(c,f)\in V\times \cF$, $\varphi(a)=a$, $\varphi(b)=b$ and
$\varphi(c)=\varepsilon$ for $c\in \cT\setminus \{a,b\}$. 

The role of the $\cF$-component of a symbol $(c,f)$ in $V'$ is to store the
$\varphi$-image of the terminal word to be derived from $c$.  Since $H$ generates
only $(a,b)$-words, we choose as axioms $I'=I\times\{ab\}$. The role of
the tables is to distribute the two letters (in the $\cF$-component) in each
word along a production.

In the ET0L case, the new set of tables is $\Delta'=\{P'_1,\ldots,P'_n,P'_{n+1}\}$, where
\begin{align*}
P'_i &= \{ (c, f) \to (c_1,f_1)\cdots (c_m,f_m) \mid c\to c_1\cdots c_m\in P_i,~~f=f_1\cdots f_m \} \end{align*}
for each $i\in \{1,\ldots,n\}$ and
\begin{align*}
P'_{n+1} = \{ (a, a) \to a,~ (b, b) \to b\} ~~\cup~~ \{ (c,\varepsilon) \to c \mid c\in \cT\setminus\{a,b\} \} ~~\cup~~ \{c\to c \mid c\in \cT \}.
\end{align*}
In the EDT0L case, we introduce a separate table for each choice of a factorisation $f=f_1\cdots f_\ell$ for each $f\in\cF$, where $\ell$ is the maximal length of any right-hand side in $H$.

The idea underlying the definition of the tables $P'_i$ is that we make multiple copies of each rule in $P_i$ based on the choices for how to partition $f$ and distribute the factors among the $c_i$'s.

We claim now that $H'=(V',\cT,\Delta',I')$ admits the morphism $\varphi$. Property (1) follows from the definition of $\varphi$, and property (2) from the definition of the tables above.

Let $\psi\colon V'^*\to V^*$ be the `first coordinate projection' morphism with $\psi((c,f))=c$ for
$(c,f)\in V\times \cF$ and $\psi(c)=c$ for $c\in \cT$.  
 
For the inclusion $L(H')\subseteq L(H)$, note that $u\Rightarrow_{H'} v$
implies $\psi(u)\Rightarrow_H \psi(v)$ or $\psi(u)=\psi(v)$, so in any case
$\psi(u)\Rightarrow_H^*\psi(v)$.  Thus, if $v\in L(H')$ with $w\Rightarrow_{H'}^*
v$ and $w\in I'$, then $\psi(w)\Rightarrow_{H}^*\psi(v)$ and $\psi(w)\in I$,
hence $v=\psi(v)\in L(H)$. This implies $L(H')\subseteq L(H)$.

For the inclusion $L(H)\subseteq L(H')$, a straightforward induction on $n$
yields the following claim: If $u\Rightarrow_H^n v$ with $u\in V^*$ and an
$(a,b)$-word $v\in \cT^*$, then we have $u'\Rightarrow_{H'}^* v$ for some $u'\in
V'^*$ such that $\psi(u')=u$ and $\varphi(u')=ab$. We apply this to a
derivation $s\Rightarrow_H^* v$ with $s\in I$. Then our claim yields an $s'\in
V'^*$ with $s'\Rightarrow_{H'}^* v$, $\psi(s')=s\in I$, and
$\varphi(s')=ab$.  This means $s'\in I'$ and thus $v\in L(H')$.
\end{proof}

\begin{prop}\label{prop:moveright}
Let $L$ be an $(a,b)$-language that is ET0L (respectively EDT0L).  Then $\pi(L)$ is  ET0L (respectively EDT0L).
\end{prop}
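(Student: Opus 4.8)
The plan is to exploit \prref{lem:abmorphism}: we may assume $L=L(H)$ for an ET0L (respectively EDT0L) system $H=(V,\cT,\Delta,I)$ that admits an $(a,b)$-morphism $\varphi$. Since $\varphi(u)=\varphi(v)$ whenever $u\Rightarrow_H v$ and the axioms have $\varphi$-image $ab$, every sentential form $\alpha$ occurring in a derivation satisfies $\varphi(\alpha)=ab$. Hence each such $\alpha$ contains a \emph{unique} symbol responsible for the terminal $a$ and a unique symbol responsible for the terminal $b$: either a single ``$ab$-carrier'' of type $ab$, or an ``$a$-source'' of type $a$ strictly to the left of a ``$b$-source'' of type $b$, with every other symbol of type $\varepsilon$. (A type-$\varepsilon$ symbol $c\to c_1\cdots c_m$ forces each $\varphi(c_i)=\varepsilon$, so it only ever produces type-$\varepsilon$ children.) This uniqueness is what makes the simulation below possible.

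The idea is to build a system $H'$ whose sentential forms are fixed rearrangements of those of $H$. If $\alpha=\alpha_1\,s_a\,\alpha_2\,s_b\,\alpha_3$, with $\alpha_1,\alpha_2,\alpha_3$ of type $\varepsilon$ and $s_a,s_b$ the two sources, the corresponding form of $H'$ will be
\[ \alpha_1\, \widehat{s_a}^{L}\, \widehat{s_b}^{R}\, \alpha_3\, \widehat{s_a}^{R}\, \alpha_2\, \widehat{s_b}^{L}, \]
so that, read left to right, the material before $a$, then $ab$, then the material after $b$, then the material between $a$ and $b$ appears in exactly the order required by $\pi$. Each source $s$ is carried as two tagged copies $\widehat{s}^{L},\widehat{s}^{R}$: the $L$-copy of $s_a$ and the $R$-copy of $s_b$ sit in the prefix and emit the factors of $x$ and $z$ that come from the sources, while the $R$-copy of $s_a$ and the $L$-copy of $s_b$ sit at the far right and emit the part of $y$ coming from the sources. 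Type-$\varepsilon$ symbols keep their original rules and are rewritten \emph{in place}, which is legitimate because the tables rewrite each symbol independently of its neighbours, so a type-$\varepsilon$ symbol never leaves its zone.

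First I would treat the rewriting of a source, say the $a$-source with chosen rule $s\to c_1\cdots c_m$. By property~(2) of $\varphi$ exactly one child $c_j$ is again of type $a$; the children $c_1\cdots c_{j-1}$ lie before the $a$ (they join $x$) and $c_{j+1}\cdots c_m$ lie after it but before $b$ (they join $y$). The two copies of $s$ are therefore rewritten by the coordinated pair of rules
\[ \widehat{s}^{L}\to c_1\cdots c_{j-1}\,\widehat{c_j}^{L}, \qquad \widehat{s}^{R}\to \widehat{c_j}^{R}\,c_{j+1}\cdots c_m, \]
symmetrically for the $b$-source, and analogously (with three tagged copies $\widehat{s}^{xa},\widehat{s}^{bz},\widehat{s}^{y}$) for an $ab$-carrier, whose splitting into an $a$-source and a $b$-source is exactly the moment the short layout opens up into the displayed one. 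Finalisation is handled by letting the $a$-source emit the terminal $a$ from its $L$-copy and $\varepsilon$ from its $R$-copy (and dually for $b$), so that the two terminal letters appear adjacently as $ab$ in the prefix. The new axioms are the three-copy forms of the type-$ab$ axioms of $H$.

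The one genuine obstacle is that a source must send material to two zones -- the beginning and the far end of the word -- that are arbitrarily far apart, something a single context-free rewriting step cannot achieve. This is precisely why the sources are pre-split into separate tagged copies, and why those copies must be rewritten \emph{consistently}, using pieces of one and the same original rule. I would enforce this by making, for each original table $P_i$, one new table for each choice of the rule(s) applied to the current source(s): in that table the source copies are offered only the pieces of the chosen rule, whereas type-$\varepsilon$ symbols retain all their $P_i$-rules. Because each sentential form has a unique $a$-source and a unique $b$-source, each tagged copy occurs exactly once, so the copies are forced to stay synchronised and no spurious derivations arise. In the EDT0L case $P_i$ is deterministic and the split positions are forced by $\varphi$, so each new table again has exactly one rule per letter, keeping the system deterministic. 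Finally I would prove $L(H')=\pi(L)$ by induction on derivation length, establishing the step-for-step correspondence $\alpha\Rightarrow_H\beta \iff \rho(\alpha)\Rightarrow_{H'}\rho(\beta)$ between $H$-forms and their rearrangements $\rho$ as above, and checking that on terminal words $\rho$ restricts to $\pi$.
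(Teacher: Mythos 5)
Your proposal is correct in substance and follows the same skeleton as the paper's proof: reduce via \prref{lem:abmorphism} to a system admitting an $(a,b)$-morphism, fix a rearrangement of sentential forms, and simulate the original system step by step, using the fact that each reachable form has a unique type-$a$ and type-$b$ (or type-$ab$) symbol to keep the simulation synchronised. The simulation mechanism itself, however, is genuinely different. The paper leaves the sources in place and appends two fresh marker letters $p,q$; the material a source would emit into the middle zone is not produced by the source at all but replayed by the markers (rules $p\to pu$, $q\to vq$), with tables indexed by the transferred words $u,v,w$ of bounded length, and $\pi(L)$ is recovered at the end by an erasing homomorphism. You instead split each source into two (when fused, three) tagged copies placed in the two zones, each copy emitting its zone's share of the right-hand side, with tables indexed by the chosen source rule(s). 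Your version avoids the bounded-word bookkeeping and the final erasing homomorphism, at the price of duplicating part of the alphabet; both versions rely on exactly the same uniqueness property to force consistency, and both stay deterministic in the EDT0L case.

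One piece of fine print deserves attention, and it affects your construction and the paper's equally. Under the paper's stated convention that a letter with no specified rule in a table implicitly carries the identity rule, a table indexed by a rule for the \emph{wrong} source symbol (in the paper's setting: by words $u,v$ matching no rule of the current sources) can still fire, rewriting the type-$\varepsilon$ letters while freezing the sources. This can produce rearrangements of sentential forms that are not reachable in $H$, and one can build small (even EDT0L) examples where such steps lead to terminal words outside $\pi(L)$. Your argument that ``no spurious derivations arise'' because each tagged copy occurs exactly once rules out desynchronisation of the $L$/$R$ copies of one source, but not these wrong-table applications. The repair is routine: in each table, send every tagged copy not belonging to the indexed rule(s) to a permanently dead nonterminal (this keeps exactly one rule per letter, so determinism survives). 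With that one sentence added, your construction is sound; the paper's ``can be verified straightforwardly'' glosses over the same point.
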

\begin{proof}
Let $L=L(H)$, where $H=(V,\cT,\Delta,I)$. By \prref{lem:abmorphism}, we may assume that there is an
$(a,b)$-morphism $\varphi$ for $H$.  We now use $\varphi$ to define a map
similar to $\pi$ on words over $V$.  A word $w\in V^*$ is said to be an
\emph{$(a,b)$-form} (short for $(a,b)$-sentential-form) if $\varphi(w)=ab$. Such a word is either of the
form $xCy$, where $r,s\in V^*$ and $C\in V$, with
$\varphi(x)=\varphi(y)=\varepsilon$ and $\varphi(C)=ab$; or it is of the form
$xAyBz$ with $x,y,z\in V^*$ and $A,B\in V$ with
$\varphi(x)=\varphi(y)=\varphi(z)=\varepsilon$ and $\varphi(A)=a$,
$\varphi(B)=b$. 
In the former case, $w$ is called \emph{fused}, in the latter it is called \emph{split}.

Let $p,q$ be symbols with $p,q\notin V$. We define the function $\tilde{\pi}$
on $(a,b)$-forms as follows.  If $w$ is fused, then
$\tilde{\pi}(w)=wpq$. If $w$ is split with $w=xAyBz$ as above, then
$\tilde{\pi}(w)=xABzpyq$. In other words, 
the factor between $a$ and $b$ in $w$ will be moved between $p$ and $q$.
For a set $L$ of $(a,b)$-forms, we set
$\tilde{\pi}(L)=\{\tilde{\pi}(w) \mid w\in L\}$.  Note that $\tilde{\pi}$ differs from
$\pi$ by introducing the letters $p,q$. This will simplify the ensuing
construction.

The idea is to construct an ET0L (respectively EDT0L) system
$H'=(V',\cT',\Delta',I')$, in which $V'$ is the disjoint union $V\cup \{p,q\}$
and $\cT'=\cT\cup \{p,q\}$, such that for $(a,b)$-forms $u,v\in V^*$, we have
\begin{align} &u\Rightarrow_H v  && \text{if and only if} && \tilde{\pi}(u) \Rightarrow_{H'} \tilde{\pi}(v) \label{eq:eperm} \end{align}
Moreover, for each $(a,b)$-form $u\in V^*$ and $v'\in V'^*$ with
$\tilde{\pi}(u)\Rightarrow_{H'} v'$, there is an $(a,b)$-form $v\in
V^*$ such that 
\begin{equation}\label{eq:fperm} 
\begin{gathered}
 \xymatrix@M=7pt@=20pt{ 
u \ar@2{->}[r]_>{H} \ar@{|->}[d]_{\tilde{\pi}}               & v \ar@{|->}[d]^{\tilde{\pi}}\\
\tilde{\pi}(u) \ar@2{->}[r]_>{H'}   & v'
}
\end{gathered}
\end{equation}
For example, if the derivation $\tilde{\pi}(xAyBz)=xABzpyq \Rightarrow_{H'} x'A'B'z'py'q$ holds (the split-split case for $u$ and $v$), then $xAyBz \Rightarrow_{H}x'A'y'B'z'$, and similar implications hold in the other cases.

We define $I'$ as $I'=\{ \tilde{\pi}(w) \mid w\in I \}$, hence equation (\ref{eq:eperm})
implies $\tilde{\pi}(L(H))\subseteq L(H')$ and equation (\ref{eq:fperm}) implies
$L(H')\subseteq \tilde{\pi}(L(H))$.  Together, we have
$L(H')=\tilde{\pi}(L(H))$, meaning $\tilde{\pi}(L(H))$ is an ET0L (respectively EDT0L) language.
Furthermore, we have $\pi(L(H))=\psi(\tilde{\pi}(L(H)))$, where $\psi$ is the
homomorphism that erases $p,q$. Thus, since the classes of ET0L and EDT0L languages are
closed under homomorphic images, proving equations (\ref{eq:eperm}), (\ref{eq:fperm}) implies that
$\pi(L(H))$ is an ET0L (respectively EDT0L) language and hence \prref{prop:moveright}.

As before, we write $\Delta=\{P_1,\ldots,P_n\}$.
Let $\ell$ be the maximal length of a right-hand side in the productions of $H$, and let $V^{\le \ell}$ denote the set of all words in $V^*$ of length at most $\ell$.
The set $\Delta'$ consists of the following tables:
\begin{align*}
&P'_{i}     && \text{for each $1\le i\le n$}, \\
&P'_{i,w}   && \text{for each $1\le i\le n$ and $w\in V^{\le \ell}$ with $\varphi(w)=\varepsilon$,} \\
&P'_{i,u,v} && \text{for each $1\le i\le n$ and $u,v\in V^{\le \ell}$ with $\varphi(u)=\varphi(v)=\varepsilon$,}
\end{align*}
which we describe next.  The table $P'_i$ allows $H'$ to mimic (in the sense of
\eqref{eq:eperm}) steps in $P_i$ that start in a fused word and result in
a fused word. Each table $P'_i$ comprises the following productions:
\begin{align*}
& A\to z  && \text{for each $A\to z\in P_i$ with $\varphi(A)=\varepsilon$,} \\
& C\to xDy && \text{for each $C\to xDy\in P_i$ with $D\in V$} \\
&          && \text{and $\varphi(C)=\varphi(D)=ab$,} \\
& p \to p, && \\
& q\to q.  &&
\end{align*}
The table $P'_{i,w}$ mimics all steps of $P_i$ where a fused word is
turned into a split one, such that between the introduced $A,B\in V$,
$\varphi(A)=a$, $\varphi(B)=b$, the word $w$ is inserted. It contains the
following productions:
\begin{align*}
& A\to z        && \text{for each $A\to z\in P_i$ with $\varphi(A)=\varepsilon$,} \\
& C\to x A B y  && \text{for each $C\to x A w B y \in P_i$ with $\varphi(C)=ab$,} \\
&               && \text{$\varphi(A)=a$, and $\varphi(B)=b$,} \\
& p \to p w,    && \\
& q\to q.       &&
\end{align*}
Finally, the table $P'_{i,u,v}$ mimics a step of $P_i$ that starts in a split word and produces a split one, such that 
(i) the symbol $A$ with $\varphi(A)=a$ generates $u$ to its right and (ii) the symbol $B$ with $\varphi(B)=b$ generates $v$ to its left.
It consists of the productions
\begin{align*}
& A\to z     && \text{for each $A\to z\in P_i$ with $\varphi(A)=\varepsilon$,} \\
& A \to xA'  && \text{for each $A\to xA'u\in P_{i}$ with $\varphi(A)=\varphi(A')=a$,} \\
& B \to B'y  && \text{for each $B\to vB'y\in P_{i}$ with $\varphi(B)=\varphi(B')=b$,} \\
& p \to p u, && \\
& q\to vq.    && 
\end{align*}

It can be verified straightforwardly that with these tables, equations (\ref{eq:eperm}), (\ref{eq:fperm})
are satisfied. In addition, if the table $P_i$ has exactly one rule for each letter in $V$ then $P'_i,P'_{i,w}$ and $P'_{iu,v}$ has exactly one rule for each letter in $V'$, so if $H$ is EDT0L then so is $H'$. We have thus proven \prref{prop:moveright}.
\end{proof}

\section{Cyclic closure of indexed is indexed}\label{sec:main}

Recall that an indexed language is one that is generated by the following type of grammar:
\begin{defn}
[Indexed grammar; \cite{\Aho}]
An {\em indexed grammar}  is a 5-tuple $(\mathcal N, \mathcal T, \mathcal I, \mathcal P, S)$ such that
\begin{enumerate}
\item $\mathcal N, \mathcal T, \mathcal I$ are three mutually disjoint sets of symbols, called {\em nonterminals, terminals} and {\em indices} (or {\em flags}) respectively.
\item $S\in\mathcal N$ is the {\em start symbol}.
\item $\mathcal P$ is a finite set of {\em productions}, each having the form of one of the following:
\begin{enumerate}
\item  $A \ra B^f$.
\item $A^f \ra v$.
\item  $A \ra u$.
\end{enumerate}
where $A, B \in\mathcal N$, $f\in \mathcal I$ and $u,v\in(\mathcal N\cup\mathcal T)^*$.
\end{enumerate}
\end{defn}
As usual in grammars, indexed grammars successively transform sentential forms,
which are defined as follows. An \emph{atom} is either a terminal letter
$x\in\mathcal{T}$ or a pair $(A,\gamma)$ with $A\in \mathcal N$ and $\gamma\in
\mathcal{I}^*$. Such a pair $(A,\gamma)$ is also denoted $A^\gamma$.  A
\emph{sentential form} of an indexed grammar is a (finite) sequence of atoms.
In particular, every string over $\mathcal T$ is a sentential form. The
language defined by an indexed grammar is the set of all strings of terminals
that can be obtained by successively applying production rules starting from 
the sentential form $S$. Let $\A\in \mathcal N, \gamma\in \mathcal I^*$.
Define a letter homomorphism $\pi_\gamma$ by
\[ 
\pi_\gamma(c)=\begin{cases}c^\gamma & \text{if $c\in \mathcal N$},\\
c & \text{if $c\in \mathcal T$}.\end{cases}
\]
In contrast to ETOL systems, where each step replaces every symbol in the sentential form,
indexed grammars transform only one atom per step.
Production rules transform sentential forms as follows.  
For an atom $A^\gamma$ in the sentential form:
\begin{enumerate}
\item applying $A \ra B^f$ replaces one occurrence of $A^\gamma$ by $B^{f \gamma}$
\item if $\gamma=f\delta$ with $f\in \mathcal I$, applying $A^f\ra v$ replaces one occurrence of $A^\gamma$ (with $\gamma\in\mathcal I^*$) by $\pi_\delta(v)$ 
\item applying $A\ra u$ replaces one occurrence of $A^\gamma$ by $\pi_\gamma(u)$.
\end{enumerate}
We call the operation of successively applying productions starting from the
sentential form $S$ and terminating at a string $u\in\mathcal T^*$  a {\em
derivation} of $u$.  We use the notation $\Rightarrow$ to denote a sequence of
productions within a derivation, and call such a sequence a {\em
subderivation}. Sometimes we abuse notation and write $u\to v$ for sentential
forms $u$ and $v$ to denote that $v$ results from $u$ by applying one rule.

We represent a derivation $S\Rightarrow u\in \mathcal T^*$ pictorially using a {\em parse tree}, 
which is defined in the same way as for context-free grammars (see for example \cite{\HU} page 83) 
with root labeled by $S$, internal nodes labeled by $A^\omega$ for $A\in  \mathcal N$ and $\omega\in \mathcal{I}^*$ 
and leaves labeled by $\mathcal T\cup\{\varepsilon\}$.

A {\em path-skeleton} of a parse tree is the (labeled) $1$-neighbourhood of some path from the root vertex to a leaf.
See \prref{fig:pathskeleton} for an example.

\begin{defn}[Normal form]
An indexed grammar $(\mathcal N, \mathcal T, \mathcal I, \mathcal P, S)$ is in {\em normal form} if
all productions are of one of the following types:
\begin{enumerate}
\item  $A \ra B^f$
\item \label{second}$ A^f \ra B$
\item \label{third} $A \ra BC$
\item  \label{forth} $A \ra a$
\end{enumerate}
where $A, B, C \in\mathcal N$, $f\in \mathcal I$ and $a\in\mathcal T$.
\end{defn}

An indexed grammar can be put into normal form as follows. 
For each production $\A^f \ra v$ with $v\not\in\mathcal N$, introduce a new nonterminal $\B$, add productions  $\A^f \ra \B, \B \ra v$, and remove $\A^f \ra v$.
By the same arguments used for Chomsky normal form, each production  $A \ra u$ without flags can be replaced by a set of 
productions of type \ref{third} and \ref{forth} above.

\cite{\Maslov, \Oshiba}
proved that the cyclic closure of a context-free language is context-free.
A sketch of a proof of this fact is given in the solution to Exercise 6.4 (c) in \cite{\HU},
and we generalise the approach taken there to show that the class of indexed languages is also closed 
under the cyclic closure operation.

\begin{thm}\label{thm:firstmain}
If $L$ is indexed, then $\cyc(L)$ is indexed. 
\end{thm}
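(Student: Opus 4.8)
The plan is to imitate the context-free construction sketched in the solution to Exercise 6.4(c) of \cite{\HU}, namely to re-root a derivation tree at the cut point, while paying for the extra bookkeeping forced by the index strings. First I would put the given grammar $G=(\mathcal N,\mathcal T,\mathcal I,\mathcal P,S)$ into the normal form above, so that every production has one of the types $A\ra B^f$, $A^f\ra B$, $A\ra BC$ or $A\ra a$. Fix a derivation of a word $w=w_1w_2\in L$, together with its parse tree, and mark the cut between $w_1$ and $w_2$ by the leaf immediately to its right; let $A_0=S,A_1,\dots,A_m$ be the spine of the path-skeleton from the root to that leaf, and let $\gamma_i\in\mathcal I^*$ be the index string carried by $A_i$. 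Since type-$A\ra BC$ productions copy the index string to both children while the unary productions push or pop a single flag, each off-spine subtree hanging at level $i$ carries \emph{exactly} the index string $\gamma_i$, and the yield factorises as $w=L_0\cdots L_{m-1}\,a\,R_{m-1}\cdots R_0$, where $L_i$ (respectively $R_i$) is the yield of the off-spine subtree lying to the left (respectively right) of the spine at level $i$. Consequently the target word is
\[ w_2w_1=a\,R_{m-1}\cdots R_0\,L_0\cdots L_{m-1}. \]

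The second step is to read off from this factorisation the shape of a parse tree for $w_2w_1$. Setting $V_{-1}=\varepsilon$ and $V_i=R_i\,V_{i-1}\,L_i$, the new tree should emit $a$ first and then $V_{m-1}=R_{m-1}\cdots R_0\,L_0\cdots L_{m-1}$. Its backbone therefore runs through the nodes $V_{m-1},V_{m-2},\dots,V_0$; that is, it traverses the old spine \emph{in reverse}, and the off-spine subtree attached at the $V_i$-node must again be produced with index string $\gamma_i$.

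The third step constructs an indexed grammar $G'$ realising this. Along the new backbone I would reuse, at the $V_i$-node, the original off-spine subtree verbatim (its nonterminal together with its index string $\gamma_i$), and mimic the old spine production in reverse: an old push $A_i\ra A_{i+1}^f$ becomes a pop in passing from $V_i$ to $V_{i-1}$, an old pop $A_i^f\ra A_{i+1}$ becomes a push, and the two type-$A\ra BC$ steps become index-preserving steps that record whether $R_i$ or $L_i$ is the non-empty off-spine factor. A new start symbol first emits $a$ and then hands control to the $V_{m-1}$-node. This should set up a correspondence between derivations of $w_1w_2$ in $G$ with a marked internal cut and derivations of $w_2w_1$ in $G'$, giving $\cyc(L)\subseteq L(G')$ and, by tracing the correspondence backwards, equality; the boundary cases $w_1=\varepsilon$ or $w_2=\varepsilon$ yield $w_2w_1=w\in L$ and are absorbed by taking the union with $L$, which is harmless since indexed languages are closed under union.

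The hard part, and the whole reason this is not immediate from the context-free case, is that the backbone is traversed in the reverse of the original order while the index stack is last-in-first-out and must start \emph{empty} at the start symbol of $G'$. Thus the $V_{m-1}$-node, created near the top of the new tree, already requires the generally non-empty index string $\gamma_{m-1}$, whereas any indexed derivation begins with the empty stack. I expect to resolve this by prefacing the reversed backbone with a guessing descent that walks the old spine from $S$ with the empty stack and, at each step, stores a record of the spine production (its type, the flag involved, the side of the spine, and the off-spine nonterminal) in an enlarged index alphabet; on reaching the cut the recorded path-skeleton is replayed by popping the records in last-in-first-out order, which is precisely the order $V_{m-1},\dots,V_0$, so that each off-spine subtree can be reissued with exactly its original index string $\gamma_i$. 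The genuinely delicate point to get right is the old \emph{pop} steps, which destroy a flag on the way down and must be faithfully reinstated on the way up; making the reversal of these steps consistent with a single index stack is where the records in the enriched index alphabet must do their work, and verifying that the guessed path-skeleton is a genuine spine of $G$ is the final ingredient of the correctness proof.
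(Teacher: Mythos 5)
Your setup — normal form, the spine from $S$ to the leaf at the cut, the factorisation $w=L_0\cdots L_{m-1}\,a\,R_{m-1}\cdots R_0$, and the plan to traverse the spine in reverse with pushes and pops interchanged — coincides with the paper's, and you have correctly isolated the crux: the reversed backbone must start with the generally non-empty index string $\gamma_{m-1}$, and must present each true index string $\gamma_i$ at the moment the level-$i$ off-spine nonterminal is re-emitted. But your proposed mechanism — a forward descent that pushes a \emph{record} of each spine production onto an enlarged index alphabet, followed by a replay that pops the records — does not work, because a nonterminal of an indexed grammar hands its \emph{entire current index string} to every nonterminal it emits. During your replay the backbone's index string is the string of remaining records, not $\gamma_i$, and the two differ as soon as the old spine contains a pop: for instance, if the spine does $S\to X^f$, $X\to Y^g$, $Y^g\to Z$, $Z\to CW$, then $C$ must receive the index string $f$, whereas the record string it would inherit is $(\mathrm{pop}\,g)(\mathrm{push}\,g)(\mathrm{push}\,f)$. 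One cannot repair this by letting the off-spine productions ``interpret'' record strings: to locate the top letter of $\gamma_i$ inside the record string one must skip past an unbounded number of cancelled push/pop pairs, which no finite nonterminal alphabet can count, and an index stack cannot be edited below its top. Nor can the stack carry the records and the true flags simultaneously, since whichever lies on top blocks access to the other.

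The paper resolves the crux with no records at all: a fresh symbol $\tilde S$ first \emph{guesses} the bottom-of-spine index string by arbitrary pushes ($\tilde S\to\tilde S^f$ for every $f\in\mathcal I$) on top of a bottom marker $\$$; then the reversed spine productions are applied \emph{nondeterministically} rather than read off a record (old $A\to B^f$ yields $\hat B^f\to\hat A$, old $A^f\to B$ yields $\hat B\to\hat A^f$, old $A\to BC$ yields $\hat B\to C\hat A$ and $\hat C\to\hat A B$), so that the backbone's index string automatically equals $\gamma_i\$$ at level $i$ — exactly what the emitted siblings must inherit; finally the unique hat-removing rule $\hat S^{\$}\to\varepsilon$ certifies a posteriori that the guessed string and the guessed reversed walk trace a genuine spine ending at $S$ with empty index string, while wrong guesses simply get stuck. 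If you replace your record-and-replay phase with this guess-and-verify scheme, the rest of your argument (including absorbing the boundary cases $w_1=\varepsilon$ or $w_2=\varepsilon$ via a rule $S_0\to S$, i.e.\ union with $L$) goes through and becomes essentially the paper's proof.
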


\begin{proof}
The idea of the proof is to take the parse-tree of a derivation of $w_1w_2\in L$ in $\Gamma$ and ``turn it upside down",
using the leaf corresponding to the first letter of the word $w_2$ as the new start symbol. 

Let $\Gamma = (\mathcal N, \mathcal T, \mathcal I, \mathcal P, \s)$ be an indexed grammar for $L$ in normal form.
If $w = a_1\ldots a_n\in L$ with $a_i\in \mathcal T$ and we wish to generate the cyclic permutation $a_k\ldots a_n a_1\ldots a_{k-1}$ of $w$, 
take some parse tree for $w$ in $\Gamma$ and draw the unique path $F$ from the start symbol $\s$ to $a_k$.
Consider the path-skeleton for $F$.

\begin{figure}[h!]   
\begin{center}
\begin{tikzpicture}[scale=1]
\draw[ultra thick,decorate]  (2,7) -- (4,5) -- (1,2)--(3,0);

\draw[decorate] (2,7) -- (1,6);
\draw[decorate] (4,5) -- (5,4);
\draw[decorate] (3,4) -- (4,3);
\draw[decorate] (1,2) -- (0,1);

\draw (3,6) node {$\bullet$};
\draw (2,3) node {$\bullet$};
\draw (2,1) node {$\bullet$};

\draw (2.1,7.4) node {$\s$};
\draw (0.7,6.2) node {$\A_1$};
\draw (3.3,6.2) node {$\B_1$};
\draw (4.4,5.2) node {$\B_2^{f}$};
\draw (5.3,4.2) node {$\A_4^{f}$};
\draw (2.7,4.2) node {$\B_3^{f}$};
\draw (4.3,3.2) node {$\A_3^{f}$};
\draw (1.7,3.2) node {$\B_4^{f}$};
\draw (0.7,2.2) node {$\B_5^{gf}$};
\draw (-.3,1.2) node {$\A_2^{gf}$};
\draw (2.5,1.2) node {$\B_6^{gf}$};
\draw (3,-.2) node {$a_k$};

\end{tikzpicture}

\end{center}
\caption{Path-skeleton in an indexed grammar.  
\label{fig:pathskeleton}}
\end{figure}
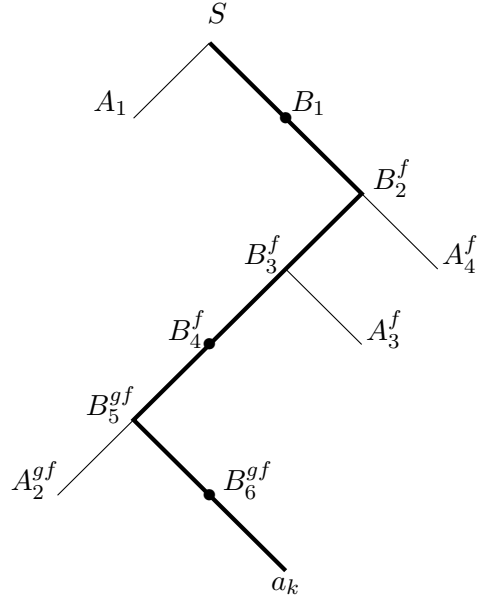

In the example given in \prref{fig:pathskeleton},
the desired word $a_k\ldots a_n a_1\ldots a_{k-1}$ can be derived from the 
string $a_k \A_3^{f} \A_4^{f} \A_1 \A_2^{gf}$, using productions in $\mathcal P$.  

Therefore we wish to enlarge the grammar to generate all strings $$a_k \A_{k+1}^{w_{k+1}}\dots \A_n^{w_n} \A_1^{w_1} \ldots \A_{k-1}^{w_{k-1}},$$
where $\A_1^{w_1},\ldots,\A_{k-1}^{w_{k-1}}$ are the labels of the vertices lying immediately to the left of $F$ (in top to bottom order), 
and $\A_{k+1}^{w_{k+1}},\ldots,\A_n^{w_n}$ are the labels of the vertices lying immediately to the right of $F$ (in bottom to top order).
We do this by introducing new `hatted' nonterminals, with which we label all the vertices along the path $F$,
and new productions which are the reverse of the old productions `with hats on'.  By first nondeterministically guessing
the flag on the nonterminal immediately preceding $a_k$, we are able to essentially generate the path-skeleton in reverse.

The grammar for $\cyc(L)$ is given by $\Gamma' = (\mathcal N', \mathcal T', \mathcal I', \mathcal P\cup \mathcal P', S_0)$, 
where $\mathcal T'=\mathcal T$, $\mathcal I'=\mathcal I\cup \{\$\}$ (where $\$$ is a new symbol not in $\mathcal I$), 
$\s_0\in\mathcal N'\setminus \mathcal N$ is the new start symbol, and $\mathcal N'$ and $\mathcal P'$ are as follows.
Let $\hat{\mathcal N}$ be the set of symbols obtained from $\mathcal N$ by placing a hat on them. 
Then the disjoint union $\mathcal N' = \mathcal N\cup \hat{\mathcal N}\cup\{S_0, \tilde{\s}\}$ is the new set of nonterminals.

The productions $\mathcal P'$ are  as follows:
\begin{enumerate}
\item  $\s_0 \ra \s$, $\s_0 \ra \tilde{\s}^{\$}$, $\hat{\s}^{\$}\ra \e$ 
\item for each $f\in \mathcal I$, a production $\tilde{\s} \ra \tilde{\s}^f$
\item for each production $\A \rightarrow a$ in $\mathcal P$, a production $\tilde{\s} \rightarrow a\hat{\A} $
\item for each production $A\ra B^f$ in $\mathcal P$, a production
$\hat{B}^f \ra \hat{A}$
\item for each production $\A^f \ra \B$ in $\mathcal P$, a production 
$\hat{\B} \ra \hat{\A}^f$
\item for each production $\A \ra \B\C$ in $\mathcal P$,  productions
$\hat{\B} \ra \C \hat{\A}$ and $\hat{\C} \ra \hat{\A} \B$
\end{enumerate}

Note that the new grammar is no longer in normal form.

Informally, the new grammar operates as follows.
Let $w = w_1 w_2\in L$ and suppose we wish to produce $w_2 w_1$.
If a derivation starts with $\s_0 \ra \s$, then the word produced is some word from $L$.
(This corresponds to the case when one of the $w_i$ is empty.)
Otherwise derivations start with $\s_0 \ra \tilde{\s}^\$$, followed by some sequence of productions 
$\tilde{\s} \ra \tilde{\s}^f$,
building up a flag word on $\tilde{\s}$. This is how we nondeterministically guess the flag label 
$\gamma$ on the second last node  of the path-skeleton.
After this we apply a production $\tilde{\s} \rightarrow a\hat{\A} $,
where $a$ is the first letter of $w_2$ (labelling the end leaf of the path-skeleton) 
and $A$ is the non-terminal labelling the 
second last vertex of the path-skeleton.  
Note that the flag label $\gamma\$$ is transferred to $\hat{\A}$. 
After this point, productions of types 4, 5, and 6 are applied to simulate going in 
reverse along the path-skeleton, at each step producing a sentential form with 
exactly one hatted symbol.
The only way to remove the hat symbol is to apply the production $\hat{\s}^{\$}\ra \e$. 
Observe that all flags on nonterminals in a derivation starting from $\s_0 \ra \tilde{\s}^{\$}$  
are words in $\mathcal I^*\$$, and since $\$$ is always at the right end of a flag it 
does not interfere with any productions from $\mathcal P$, so in particular rules $A\ra a$ 
to the sides of the path-skeleton produce the same strings of terminals as they do in $\Gamma$.

We will show by induction on $n$ that in this new grammar, if $\A, \A_1, \ldots, \A_n\in \mathcal N$ then 
\begin{equation}\A^w \Ra \A^{w_1}_1\ldots  \A^{w_i}_i \ldots \A^{w_n}_n \label{deriv:forward} \end{equation} if and only if 
\begin{equation}\hat{\A}^{w_i}_i \Ra \A^{w_{i+1}}_{i+1} \ldots \A^{w_n}_n \hat{\A}^{w} \A^{w_1}_1 \ldots \A^{w_{i-1}}_{i-1}  
\label{deriv:rotated} \end{equation} for all $1\leq i\leq n$.

To see why this will suffice, suppose first that \[ \s \Ra \A^{w_1}_1\ldots 
\A^{w_{i-1}}_{i-1}  \A^{w_i}_i \A^{w_{i+1}}_{i+1} \ldots \A^{w_n}_n  \ra 
\A^{w_1}_1\ldots \A^{w_{i-1}}_{i-1} a \A^{w_{i+1}}_{i+1} \ldots \A^{w_n}_n \] in the original grammar $\Gamma$. 
So $\A_i\ra a $ is in $\mathcal P$.
Then in the new grammar
\[
 \s_0 \Ra \tilde{\s}^{w_i\$}  \ra a \hat{\A}^{w_i\$}_i 
  \Ra a \A^{w_{i+1}\$}_{i+1} \ldots \A^{w_n\$}_n \hat{\s}^{\$} \A^{w_1\$}_1 \ldots \A^{w_{i-1}\$}_{i-1} \\
  \ra a \A^{w_{i+1}\$}_{i+1} \ldots \A^{w_n\$}_n \A^{w_1\$}_1 \ldots \A^{w_{i-1}\$}_{i-1}.
\]
Each $A^{w_j\$}_j$ produces exactly the same set of words in $\Gamma'$ as $A_j^{w_j}$ produces in $\Gamma$.
Hence every cyclic permutation of a word in $L$ is in the new language.

Conversely, suppose $\s_0 \Ra a \B^{v_1}_1 \ldots \B^{v_n}_n$ and that this subderivation does not start with  $\s_0 \ra \s$.
Then the subderivation begins with $\s_0 \ra \tilde{\s}^{\$} \Ra \tilde{\s}^{u} \ra a \hat{\A}^{u}$ for some $u\in {\mathcal I}^* \$$, $\A\in \mathcal N$.
Once a `hatted' symbol has been introduced, the only way to get rid of the hat is via the production $\hat{\s}^{\$} \ra \e$.
Hence we must have $\hat{\A}^{u} \Ra \B^{v_1}_1 \ldots \B^{v_j}_j \hat{\s}^{\$}  \B^{v_{j+1}}_{j+1} \ldots \B^{v_n}_n$ for some ${0\leq j\leq n}$
(with the factor before or after $\hat{\s}$ being empty if $j=0$ or $j=n$ respectively).

But then 
\[
\s^{\$} 
 \Ra \B^{v_{j+1}}_{j+1} \ldots \B^{v_n}_n \A^u \B^{v_1}_1 \ldots \B^{v_j}_j \\
\ra \B^{v_{j+1}}_{j+1} \ldots \B^{v_n}_n a \B^{v_1}_1 \ldots \B^{v_j}_j 
\]
and so if a word is produced by the new grammar, some cyclic permutation of that word is in $L$.

We finish by giving the inductive proof of the equivalence of \eqref{deriv:forward} and \eqref{deriv:rotated}. For the case $n=1$, 
the productions of type 5 and 6 in the definition of the grammar for $\cyc(L)$ show that $\A^w \Ra \B^u$ if and only if $\hat{\B}^{u} \Ra \hat{\A}^{w}$. 
For the case $n=2$, we have $\A^w \Ra \B^u \C^v$ if and only if at some point in the parse tree, 
we see a subtree labeled
 $\X^t \ra \Y^t \Z^t$, 
with $\A^w \Ra \X^t$, $\Y^t \Ra \B^u$ and $\Z^t \Ra \C^v$.  The productions in these last three subderivations are all of the form
$D \ra {E}^f$ or ${D}^f \ra {E}$, so they are equivalent to 
$\hat{\X}^t \Ra \hat{\A}^w$, $\hat{\B}^u \Ra \hat{\Y}^t$ and $\hat{\C}^v \Ra \hat{\Z}^t$.  
Also $\X \ra \Y \Z$ if and only if $\hat{\Y} \ra \Z \hat{\X}$ and $\hat{\Z} \ra \hat{\X} \Y$.
Putting these together, we have $\A^w \Ra \B^u \C^v$ if and only if
\[ \hat{\B}^u \Ra \hat{\Y}^t \ra \Z^t \hat{\X}^t \Ra \C^v \hat{\A}^w \]
and
\[ \hat{\C}^v \Ra \hat{\Z}^t \ra \hat{\X}^t \Y^t \Ra \hat{\A}^w \B^u, \]
as required.

Now for $n>2$, suppose our statement is true for $k<n$.
Then $\A^w \Ra \A^{w_1}_1 \A^{w_2}_2 \ldots \A^{w_n}_n$ if and only if for each $1\leq i\leq n$
there are $\X_i, \Y_i, \Z_i\in {\mathcal N}$ and $t\in {\mathcal I}^*$ such that $\X_i\ra \Y_i \Z_i$ and 
for some $1\leq j\leq n$ either
\[ \A^w \Ra \A^{w_1}_1 \ldots \A^{w_{i-1}}_{i-1} \X_i^t \A^{w_j}_j \ldots \A^{w_n}_n, \]
with $\Y_i^t \Ra \A_i^{w_i}$ and $\Z_i^t \Ra \A_{i+1}^{w_{i+1}} \ldots \A_{j-1}^{w_{j-1}}$, or
\[ \A^w \Ra \A^{w_1}_1 \ldots \A^{w_j}_j \X_i^t \A^{w_{i+1}}_{i+1} \ldots \A^{w_n}_n, \]
with $\Y_i^t \Ra \A_{j+1}^{w_{j+1}} \ldots \A_{i-1}^{w_{i-1}}$ and $\Z_i^t \Ra \A_i^{w_i}$.

We will consider only the second of these, as it is the slightly more complicated one and the first is 
very similar.  
The right hand side of the displayed subderivation has fewer than $n$ terms, so by our assumption, this
subderivation is valid if and only if 
\[ \hat{\X}_i^t \Ra \A^{w_{i+1}}_{i+1} \ldots \A^{w_n}_n \hat{\A}^w \A^{w_1}_1 \ldots \A^{w_{j}}_{j}. \]
But this, together with $\Y_i^t \Ra \A_{j+1}^{w_{j+1}} \ldots \A_{i-1}^{w_{i-1}}$ and $\Z_i^t \Ra \A_i^{w_i}$,
is equivalent to the existence of a derivation
\[ \hat{\A}^{w_i}_i \Ra \hat{\Z}_i^t \ra \hat{\X}_i^t \Y_i^t \Ra \A^{w_{i+1}}_{i+1} \ldots \A^{w_n}_n \hat{\A}^w \A^{w_1}_1 \ldots \A^{w_{i-1}}_{i-1}\]
such that $\hat{X}_i^t\Ra \A^{w_{i+1}}_{i+1} \ldots \A^{w_n}_n \hat{\A}^w \A^{w_1}_1 \ldots \A^{w_{j}}_{j}$ and $\Y_i^t \Ra \A_{j+1}^{w_{j+1}} \ldots \A_{i-1}^{w_{i-1}}$.
Here, $\hat{\A}^{w_i}_i \Ra \hat{\Z}_i^t$ follows from the equivalence of  \eqref{deriv:forward} and \eqref{deriv:rotated} for $n=1$.
\end{proof}

\section{Concluding remarks}

The results in this paper raise the question whether for an indexed language $L$ the language $C^k(L)$ is indexed as well, or if not, to which class of languages (within context-sensitive) it belongs. 

A consequence of our main result (Theorem~\ref{thm:etolmain}) is that permutations of context-free languages 
are indexed (a different proof of this based on parse trees can be found in \cite{BCE}).
It would be interesting to consider the possible extension of this result to the OI- and IO-hierarchies
(\cite{MR666544}, \cite{MR864744}) of languages built out of automata or grammars that extend 
the pushdown automata and indexed grammars, respectively. They define level-$n$ grammars inductively, 
allowing the flags at level $n$ to carry up to $n$ levels of parameters in the form of flags. 
Thus level-$0$ grammars generate context-free languages, and level-$1$ grammars produce indexed languages.
We conjecture that the class of level-$n$ languages is closed under cyclic closure, and also that 
if $L$ is a level-$n$ language then $C^k(L)$ is a level-$(n+1)$ language.

\nocite{*}
\bibliographystyle{abbrvnat}
\bibliography{refsDMTCS}
\label{sec:biblio}

\end{document}